\newtheorem{theorem}{Theorem}
\newtheorem{lemma}[theorem]{Lemma}
\newcommand\eps{\varepsilon}
\DeclareMathOperator{\bob}{{\sf Bob}}
\DeclareMathOperator{\stairs}{{\sf stairs}}
\newcommand{\qed}{\hspace{\stretch{1}}$\Box$}
\newenvironment{proof}{\vspace{-.25\baselineskip}\noindent\textbf{Proof.}
}{\qed\par\medskip}
\begin{document}
%
\title
{A Doubly Exponentially Crumbled Cake}

\author{
{Tobias Christ \footnote{Institute of Theoretical Computer Science, ETH Z\"urich, 8092 Z\"urich, Switzerland, {\texttt{\{christt, afrancke, gebauerh\}@inf.ethz.ch}} } }\quad
{Andrea Francke $^*$} \quad
{Heidi Gebauer $^*$} \quad
{Ji\v{r}\'{\i} Matou\v{s}ek \footnote{Dept. of Applied Mathematics and Institute of Theoretical Computer Science, Charles University, Malostransk\'{e} n\'{a}m. 25,
118~00~~Praha~1, Czech Republic, and Institute of Theoretical Computer Science, ETH Z\"urich, 8092 Z\"urich, Switzerland, {\texttt{matousek@kam.mff.cuni.cz}}} } \quad
{Takeaki Uno \footnote{National Institute of Informatics, 2-1-2, Hitotsubashi, Chiyoda-ku,
Tokyo 101-8430, Japan, {\texttt{uno@nii.jp}}}}
}

\maketitle
%
\begin{abstract}
We consider the following cake cutting game: 
Alice chooses a set~$P$ of $n$~points in
the square (cake)~$[0,1]^2$, where $(0,0) \in P$; 
Bob cuts out $n$ axis-parallel rectangles with disjoint
interiors, each of them having a point of $P$ as the
lower left corner;  Alice keeps the rest.
It has been conjectured that Bob can always secure at least half
of the cake. This remains unsettled, and it is not even known
whether Bob can get any positive fraction independent of~$n$.
 We prove that \emph{if} Alice can force Bob's share
to tend to zero, \emph{then} she must use very many points; namely,
to prevent Bob from gaining more than $1/r$ of the cake,
she needs at least $2^{2^{\Omega(r)}}$ points. 
\end{abstract}
%
%
%
%
\section{Introduction}

Alice has baked a square cake with raisins for Bob, but
really she would like to keep most of it for herself.
In this, she relies on a peculiar habit of Bob: he eats only
rectangular pieces of the cake, with sides parallel
to the sides of the cake, that contain exactly one raisin each,
and that raisin has to be exactly in the lower left corner
(see Fig.~\ref{f:example}). Alice gets whatever remains
after Bob has cut out all such pieces. In order to give
Bob at least some chance, Alice has to put a raisin
in the lower left corner of the whole cake. 

Mathematically, the cake is the square $[0,1]^2$, the raisins
form an $n$-point set $P\subset [0,1]^2$, where
$(0,0)\in P$ is required, and Bob's share consists of 
$n$ axis-parallel rectangles with disjoint
interiors, each of them having a point of $P$ as the
lower left corner.

By placing points densely along the main diagonal,
Alice can limit Bob's share to~$\frac 12+\eps$,
with $\eps>0$ arbitrarily small.
A natural question then is, can Bob always obtain
 at least half of the cake?

This question (in a cake-free formulation) appears in
Winkler~\cite{Win07} (``Packing Rectangles'', p.~133),
where he claims it to be at least 10 years old and of
origin unknown to him. The first written reference seems
to be an IBM puzzle webpage~\cite{IBM04}.

\begin{figure}[tbh]
\centering
\subfigure{
\includegraphics[width=0.3\textwidth]{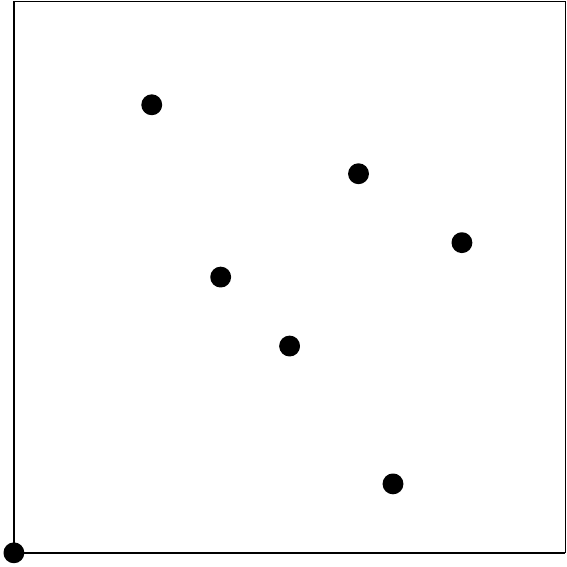} }
~~~~~~~~~~~~~~~~~~
\subfigure{
\includegraphics[width=0.3\textwidth]{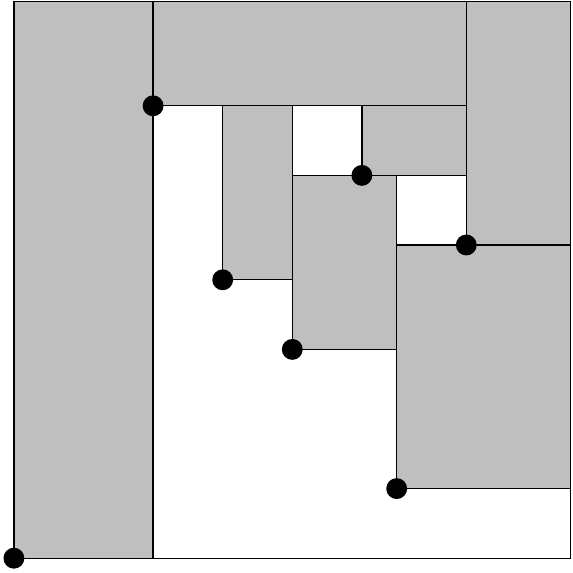}}
\caption{\label{f:example} Example: Alice's points (left)
and Bob's rectangles (right).}
\end{figure}

We tried to answer the question and could not, probably similar to many 
other people before us. We believe that there are no simple examples
leaving more than $\frac 12$ to Alice, but on the other hand,
it seems difficult to prove even that Bob can always secure
$0.0001\%$ of the cake. We were thus led to seriously considering
the possibility that Alice might be able
to limit Bob's share to less than $1/r$, for every $r>0$,
but that the number of points $n$ she would need 
would grow enormously as a function of~$r$.

Here we prove a doubly exponential lower bound on this function.
First we introduce the following notation. For a finite
$P\subset[0,1]^2$,
let $\bob(P)$ be the largest area Bob can win for $P$, and
let $\bob(n)$ be the infimum of $\bob(P)$ over all $n$-point $P$
as above.\footnote{It is easily checked that, given $P$,
there are finitely many possible placement of 
Bob's \emph{inclusion-maximal} rectangles, and therefore,
$\bob(P)$ is attained by some choice of rectangles.
On the other hand, it is not so clear whether
$\bob(n)$ is attained; we leave this question
aside.} Also, for a real number $r>1$ let
$n(r):=\min\{n: \bob(n)\le 1/r\} \in \{1,2,\ldots\}\cup \{\infty\}$.

\begin{theorem}\label{t:} There exists a constant $r_0$ such that for all $r\ge r_0$,
$
n(r)\ge 2^{2^{r/2}}$.
\end{theorem}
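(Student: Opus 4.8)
\medskip

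\noindent\textbf{Proof plan.}
I would prove the equivalent statement that $\bob(n)>1/r$ whenever $n<2^{2^{r/2}}$ and $r$ exceeds an absolute constant, by strong induction on~$n$. The inductive step rests on an \emph{extraction lemma}: \emph{if $Q$ has $n$ points, $(0,0)\in Q$, and $\bob(Q)\le 1/r$, then there is a rectangle $R^\star=[0,w]\times[0,h]$ with $wh\ge 1-c/r$ and $|Q\cap R^\star|\le n^{\,2^{-c/2}}$}, for an absolute constant~$c$ (the cleanest version has $c=2$ and $|Q\cap R^\star|\le\sqrt n$). Here one uses the scale-invariance of the game: the lower-left corner of $R^\star$ is $(0,0)\in Q$, so Bob's best area confined to $R^\star$ equals $\mathrm{area}(R^\star)\cdot\bob(\widehat Q)$, where $\widehat Q\subseteq[0,1]^2$ is the affine rescaling of $Q\cap R^\star$; also $\bob(n)$ is non-increasing in~$n$.

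Granting the lemma, the induction runs as follows. Suppose $|Q|=n<2^{2^{r/2}}$ but $\bob(Q)\le 1/r$. Extract $R^\star$. Bob can play optimally inside $R^\star$ and shrink the remaining (at most $n$) rectangles, anchored in the thin discarded band, to negligible size, so $(1-c/r)\,\bob(\widehat Q)\le\mathrm{area}(R^\star)\cdot\bob(\widehat Q)\le\bob(Q)\le 1/r$, whence $\bob(\widehat Q)\le 1/(r-c)$. But $|\widehat Q|=|Q\cap R^\star|\le n^{2^{-c/2}}<2^{\,2^{r/2}\cdot 2^{-c/2}}=2^{\,2^{(r-c)/2}}$, so, since $|\widehat Q|<n$, the induction hypothesis applied to $\widehat Q$ with parameter $r-c$ gives $\bob(\widehat Q)>1/(r-c)$ --- a contradiction. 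This yields the inductive step once the base cases (bounded $n$, or $r$ near the threshold constant, where one argues directly) are checked.

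To prove the extraction lemma I would exploit that $\bob(Q)\le 1/r$ makes all of Bob's \emph{staircase sweeping} moves fail at once. The prototype: order $Q=\{p_1,\dots,p_n\}$ by $x$-coordinate (so $p_1=(0,0)$) and give $p_i$ the rectangle $[x(p_i),x(p_{i+1})]\times[y(p_i),1]$; these are interior-disjoint and each is anchored at a point of~$Q$, so $\sum_i(x(p_{i+1})-x(p_i))(1-y(p_i))\le 1/r$; the same holds with $x$ and $y$ exchanged, and --- crucially --- inside every sub-rectangle $[0,w]\times[0,h]$, whose lower-left corner is again $(0,0)$. A point of $Q$ with $y$-coordinate below $1-\Omega(1/r)$ that ``owns'' a constant fraction of the $x$-axis (has no point of $Q$ for a constant $x$-distance to its right) would by itself let Bob sweep out area exceeding $1/r$; iterating this over all thresholds and in both coordinates --- which is what the operator $\stairs$ organizes --- forces all but $O(n^{\,2^{-c/2}})$ points of $Q$ into a band of width $O(1/r)$ along the top or the right edge of the cake. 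Deleting that band from $[0,1]^2$ yields $R^\star$. The boundary cases (coordinates equal to~$1$, coincident coordinates, carrying the origin corner through the recursion) are routine and only affect~$c$.

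The step I expect to be the main obstacle is the \emph{quantitative} heart of the extraction lemma. Losing a constant \emph{factor} in $r$ per step would only give the singly-exponential bound $n(r)=2^{\Omega(r)}$; the doubly-exponential rate forces $R^\star$ to retain a $(1-O(1/r))$-fraction of the area while shedding all but a polynomially-smaller number of raisins. Establishing that Bob's failure really confines the raisins to such a thin top/right band --- rather than, say, to a constant-area interior sub-square that is itself intricate --- requires combining the \emph{whole} family of sweeping strategies (every threshold, both coordinates, used recursively) and arranging the bookkeeping so that the accumulated losses contribute only an additive constant, not a factor, to~$r$.
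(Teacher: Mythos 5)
Your outer induction is internally consistent: \emph{if} the extraction lemma were available, the recursion $n(r)\ge n(r-c)^{2^{c/2}}$ would indeed give a doubly exponential bound, and the rescaling/monotonicity steps you use to pass from $Q$ to $\widehat Q$ are correct. But the extraction lemma is the entire content of the theorem, and you have not proved it --- you explicitly flag its ``quantitative heart'' as the main obstacle. The lemma is a very strong global structural claim: that $\bob(Q)\le 1/r$ forces all but $n^{2^{-c/2}}$ of the points into an L-shaped band of area $O(1/r)$ along the top and right edges. The sweeping strategies you describe show that no \emph{single} point can own a large empty rectangle, but turning that into ``all but $\sqrt n$ points live in a thin boundary band'' is exactly the kind of statement that nobody has been able to establish for this problem; indeed, a hard subconfiguration sitting inside a wide interior slab would (by your own recursive picture) push its points toward the \emph{top of that slab}, which is in the interior of the cake and hence inside $R^\star$, so it is not even clear the lemma is true. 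As it stands the proposal reduces the theorem to an unproven and possibly false claim, so there is a genuine gap.

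It is instructive to contrast this with the route the paper actually takes, which deliberately avoids any such global structural statement. The paper decomposes the complement of $\stairs(P)$ into horizontal slabs $B_1,\dots,B_k$ anchored at the minima, uses superadditivity to get $\bob(P)\ge\alpha+\sum_i\beta_i\bob(P_i)$, and shows (i) $\rho=s/\alpha\ge r$, (ii) $s\le\frac14 2^{-\rho}$, and (iii) no single slab carries almost all of the area ($\sum_{j\ne i}\beta_j\ge 2^{\rho}s$). An averaging argument on the ``gains'' $\gamma_i=\bob(P_i)-\frac1r$, namely $\sum_i\beta_i\gamma_i\le s/r$, then produces \emph{two} disjoint subproblems each with gain at most $2^{-r}/r$, hence each requiring $n(r-2^{-(r+1)/2})$ points. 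So the paper only gains a factor $2$ per step, but loses merely $2^{-(r+1)/2}$ in $r$, and iterating $2^{r/2}$ times yields the bound. Your scheme and the paper's are in a sense dual (large gain in $n$ with constant loss in $r$, versus factor-$2$ gain in $n$ with exponentially small loss in $r$), but only the latter follows from the local, averaging-based information that the staircase decomposition actually provides. If you want to salvage your approach, you would need to either prove the extraction lemma (a substantial new result) or weaken it to something like the paper's ``two hard subproblems'' statement, at which point you have reproduced the paper's argument.
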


The only previous results on this problem we could find
is the Master's thesis of M\"{u}ller-Itten \cite{Mue10}. She conjectured
that  Alice's optimal strategy is placing the $n$ points
on the main diagonal with equal spacing (for which Bob's share
is $\frac{1}{2}\left(1 + \frac{1}{n}\right)$). She proved
this conjecture for $n\le 4$, and also in the ``grid''
case with $P=\{(0,0)\}\cup \{ (\frac{i}{n}, \frac{\pi(i)}{n}):i \in \{1, \ldots, n-1\}\}$, where $\pi$ is a permutation of $\{1, \ldots, n-1\}$.
She also showed that $\bob(n)\ge\frac 1n$.

The problem considered here can be put into a wider context.
Various problems of fair division of resources, often phrased
as cake-cutting problems, go back at least to Steinhaus, Banach and Knaster;
see, e.g., \cite{RW98}. Even closer to our particular setting
is Winkler's \emph{pizza problem}, recently solved by 
Cibulka et al.~\cite{CKMS10}.  
%
%

\section{Preliminaries} 
We call a point~$a$ a \emph{minimum} of a set $X\subseteq [0,1]^2$
if there is no $b\in X\setminus \{a\}$ for which both 
$x(b)\le x(a)$ and $y(b)\le y(a)$. 
Let $p_1,p_2,\ldots,p_k$ be an enumeration
of the minima of $P\setminus \{(0,0)\}$ in the order
of decreasing $y$-coordinate (and increasing $x$-coordinate).
Let $\stairs(P)$ be the union of all the axis-parallel rectangles
with lower left corners at $(0,0)$ whose interior avoids $P$;
see Fig.~\ref{f:sta}(a).
%
%

\begin{figure}[tbh]
\centering
\subfigure[$\stairs(P)$ and the subproblems.]
{\includegraphics[width=0.35\textwidth]
{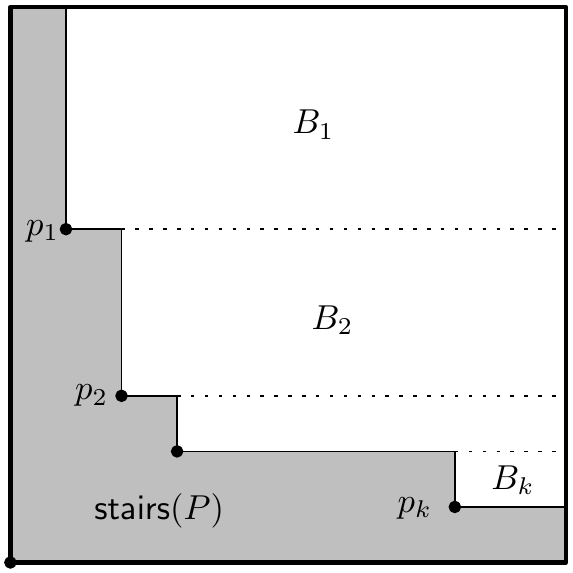}}
\qquad
\subfigure[Illustration to the proof of Lemma~\ref{l:nobig}.]{
\includegraphics[width=0.41\textwidth]
{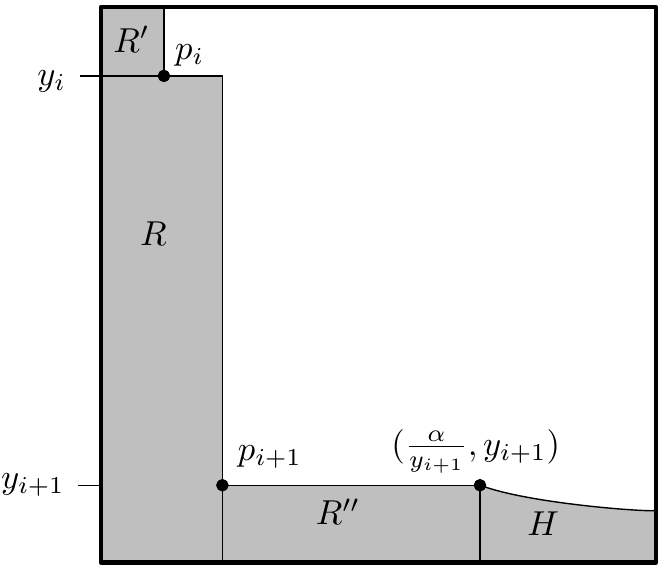}}
\caption{\label{f:sta}}
\end{figure}


Furthermore, let $s$ be the area of $\stairs(P)$, and let $\alpha$ be the largest
area of an axis-parallel rectangle contained in $\stairs(P)$.
Let us also define $\rho :=\frac s\alpha$.
For a point $p\in P$ and an axis-parallel rectangle $B\subseteq[0,1]^2$
 with lower left corner at $p$, we denote by~$a$ be the maximum area
of the cake Bob can gain in~$B$ using only rectangles
with lower left corner in points of $B\cap P$.
By re-scaling, we have $a=\beta\cdot\bob(P_B)$,
where $\beta$ is the area of~$B$ and $P_B$ denotes the set $P\cap B$
transformed by the affine transform that maps $B$ onto $[0,1]^2$.

We will use the monotonicity of $\bob(\cdot)$, i.e.,
$\bob(n+1)\le\bob(n)$ for all $n\ge 1$.
Indeed, Alice can always place an extra point on the right side
of the square, say, which does not influence Bob's share.

\section{The decomposition} We decompose the complement of $\stairs(P)$
into horizontal rectangles $B_1,\ldots,B_k$ as indicated
in Fig.~\ref{f:sta}(a), so that $p_i$ is the lower left
corner of $B_i$. Let $\beta_i$ be the area of $B_i$;
we have $s+\sum_{i=1}^k\beta_i=1$.

By the above and by an obvious superadditivity, we have
\begin{equation}\label{e:decompose}
\bob(P)\ge \alpha+\sum_{i=1}^k \beta_i \bob(P_i),
\end{equation}
where $P_i:= P_{B_i}.$ (This is a somewhat simple-minded estimate, since it doesn't take
into account any interaction among the $B_i$).

The following lemma captures the main properties of this
decomposition.

\begin{lemma}\label{l:nobig}
Let us  assume that $\rho=\frac s\alpha\ge r_0$, where
$r_0$ is a suitable (sufficiently large) constant.
Then 
\begin{itemize}\item 
$s \le \frac 14 \cdot 2^{-\rho}$ (the staircase has a small area), and
\item 
$\sum_{j:j\ne i}\beta_i \ge  2^\rho s$ 
for every $i=1,2,\ldots,k$ (none of the subproblems
occupies almost all of the area).
\end{itemize}
\end{lemma}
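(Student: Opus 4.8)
The plan is to analyze the staircase region $\stairs(P)$ directly in terms of the maximal inscribed rectangle of area $\alpha$ and the parameter $\rho = s/\alpha$. First I would set up coordinates so that one corner of the staircase is at the origin, and describe $\stairs(P)$ as the region under a decreasing step function $f$: on a sequence of intervals $[0,x_1] \supset [0,x_2] \supset \cdots$ (more precisely, on consecutive horizontal strips), the staircase reaches out to successive widths, with heights determined by the minima $p_1,\ldots,p_k$ of $P\setminus\{(0,0)\}$. The key quantitative observation is that any axis-parallel sub-rectangle of $\stairs(P)$ with lower-left corner at $(0,0)$ has area at most $\alpha$; in particular, for each $i$, the rectangle $[0,x(p_i)]\times[0,y(p_{i-1})]$ (the largest ``all-the-way'' rectangle fitting under the $i$-th step) has area $\le \alpha$. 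This gives a family of $\le \alpha$ upper bounds on products of widths and heights along the staircase.

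The heart of the argument is a convexity/entropy-style estimate: I want to show that if each of $k$ nested rectangles has area $\le\alpha$, then their union has area at most something like $\alpha\cdot H(k)$ where $H(k)$ grows only logarithmically (like the harmonic number or $1+\ln k$) — OR, running it the other way, that $s \ge \alpha$ forces the widths/heights to spread out geometrically, so that to reach total staircase area $s = \rho\alpha$ one needs roughly $k \ge 2^{\Omega(\rho)}$ steps, and moreover the overall bounding box $[0,x_{\max}]\times[0,y_{\max}]$ of the staircase has area at least $2^{\Omega(\rho)}\alpha$. Since that bounding box sits inside $[0,1]^2$, its area is at most $1$, which forces $\alpha \le 2^{-\Omega(\rho)}$ and hence $s = \rho\alpha \le \rho\, 2^{-\Omega(\rho)} \le \frac14 2^{-\rho}$ once $\rho \ge r_0$ (here the extra room in the exponent, $\Omega(\rho)$ vs.\ $\rho$, absorbs the polynomial factor $\rho$ and the constant $\frac14$, which is exactly why $r_0$ must be a large constant). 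I expect the cleanest way to get the geometric spreading is: consider the strips of the staircase by height; the $j$-th strip has some width $w_j$ and the prefix product structure together with the area-$\alpha$ constraints forces the partial sums of widths (equivalently the ``reach'' of the staircase) to at least double every constant number of steps, i.e.\ the staircase, being thin near the top and wide near the bottom, can only accumulate area $\rho\alpha$ if its footprint has blown up exponentially in $\rho$.

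For the second bullet, $\sum_{j\ne i}\beta_j \ge 2^\rho s$ for every $i$: note $\sum_j \beta_j = 1-s$, so this is equivalent to $\beta_i \le 1 - s - 2^\rho s = 1 - (1+2^\rho)s$, i.e.\ to showing no single horizontal slab $B_i$ of the decomposition is too large — it suffices to prove $\beta_i \le 1 - 2\cdot 2^\rho s$, say, using the first bullet to know $s$ is tiny so $1-(1+2^\rho)s$ is close to $1$. The rectangle $B_i$ has its left edge on the staircase at $x = x(p_i)$ and extends to $x=1$; its height is $y(p_{i-1}) - y(p_i)$ (with the obvious conventions at the ends). The point is that the rectangle $[0,x(p_i)]\times[0,y(p_{i-1})]$ lies in $\stairs(P)$, hence has area $\le \alpha$, so $x(p_i) \le \alpha / y(p_{i-1})$. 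If $\beta_i$ were close to $1$, then $B_i$ would be nearly the whole square, forcing $x(p_i)$ small and the height $y(p_{i-1})-y(p_i)$ close to $1$, hence $y(p_{i-1})$ close to $1$; plugging into $x(p_i)\le\alpha/y(p_{i-1})$ and recalling $\alpha = s/\rho$ with $s \le \frac14 2^{-\rho}$ from the first bullet gives $x(p_i) = O(\alpha) = O(2^{-\rho}/\rho)$, and then $\beta_i \le x(p_i)\cdot 1 + (\text{area lost to the staircase above height } y(p_i))$ — careful bookkeeping shows $1 - \beta_i \ge (1-x(p_i)) \cdot y(p_{i-1}) \ge$ something comfortably larger than $(1+2^\rho)s$. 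So this bullet is a short deduction once the first is in hand.

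The main obstacle is the first bullet, specifically making the ``geometric spreading'' rigorous: turning the family of constraints ``every nested prefix rectangle has area $\le\alpha$'' into the conclusion ``the whole staircase's bounding box has area $\ge 2^{\Omega(\rho)}\alpha$''. The natural approach is to track, as one descends the staircase, the quantity $A_j = x_j \cdot y_j$ (current reach times current depth-from-top, suitably defined) versus the accumulated area $s_j$, and show $s_{j+1} \le s_j + \alpha$ is too weak but that combining it with $A_j \le \alpha$ and the monotonicity of the step function yields $s_j \le \alpha(1 + \log_2(A_j/\alpha))$ or similar — i.e.\ area accumulates only logarithmically in the footprint blow-up, which inverts to the claimed exponential bound. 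I would expect to prove this by an explicit potential/induction argument on the strips; the bookkeeping with the two ``end'' conventions (the topmost strip above $p_1$ and the constraint coming from the maximality of $\alpha$) is where the constant $r_0$ gets pinned down.
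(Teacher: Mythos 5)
Your plan for the first bullet is, at heart, the paper's argument: every point $(x,y)\in\stairs(P)$ spans a prefix rectangle $[0,x]\times[0,y]\subseteq\stairs(P)$ of area $xy\le\alpha$, so $\stairs(P)$ lies under the hyperbola $y=\alpha/x$ and $s\le\alpha+\int_\alpha^1\frac\alpha x\,{\rm d}x=\alpha\bigl(1+\ln\frac1\alpha\bigr)$, whence $\alpha\le e^{1-\rho}$ and $s=\rho\alpha\le\rho e^{1-\rho}\le\frac14\cdot2^{-\rho}$ for large $\rho$. One quantitative warning, though: the base of the logarithm in your ``area accumulates only logarithmically in the footprint'' estimate is not a detail you can wave away with ``or similar.'' A doubling ($\log_2$) version gives only $\alpha\le 2^{1-\rho}$ and hence $s\le 2\rho\,2^{-\rho}$, which is \emph{not} $\le\frac14\cdot2^{-\rho}$ for large $\rho$; the slack that absorbs the polynomial factor $\rho$ comes precisely from $e>2$, i.e.\ from the natural-log bound that the hyperbola gives you. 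So the reach must be shown to multiply by $e$ (not by $2$) per additional $\alpha$ of accumulated area.

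The second bullet is where your sketch has a genuine gap. The only quantitative constraint you invoke is $x(p_i)\,y(p_{i-1})\le\alpha$, which is an \emph{upper} bound on $x(p_i)$ --- and a small $x(p_i)$ makes $B_i$ \emph{larger}, not smaller. Nothing in your argument rules out $p_i$ sitting very close to the origin with both coordinates tiny and $y(p_{i-1})$ close to $1$; then $\beta_i=(1-x(p_i))(y(p_{i-1})-y(p_i))$ is close to $1$ while $x(p_i)y(p_{i-1})\le\alpha$ holds comfortably. In exactly this regime your claimed inequality $1-\beta_i\ge(1-x(p_i))\,y(p_{i-1})$ fails: as $x(p_i),y(p_i)\to0$ and $y(p_{i-1})\to1$, the left-hand side tends to $0$ and the right-hand side to $1$. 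What is actually needed is a \emph{lower} bound on $y(p_i)$, and this is the paper's key step: the part of $\stairs(P)$ to the right of the vertical line $x=x(p_i)$ lies below the horizontal line $y=y(p_i)$ (because $p_i$ is a minimum) \emph{and} below the hyperbola $xy=\alpha$, so its area is at most $\alpha\bigl(1+\ln(y(p_i)/\alpha)\bigr)$, while the part to the left of $x(p_i)$ has area at most $2\alpha$ when $y(p_{i-1})>\frac12$ (if $y(p_{i-1})\le\frac12$ then $\beta_i\le\frac12$ and you are already done). Hence $s\le\alpha\bigl(3+\ln(y(p_i)/\alpha)\bigr)$, which forces $y(p_i)\ge\alpha e^{\rho-3}=se^{\rho-3}/\rho\ge2\cdot2^\rho s$, and then $1-\beta_i\ge 1-(y(p_{i-1})-y(p_i))\ge y(p_i)\ge2\cdot2^\rho s\ge(1+2^\rho)s$ as required. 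This truncated-hyperbola argument --- using that the staircase cannot accumulate its area $\rho\alpha$ unless $p_i$ sits high enough --- is the essential idea missing from your proposal.
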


\begin{proof} First we note that 
since no rectangle with lower left corner $(0,0)$
and upper right corner in $\stairs(P)$ has area bigger than
$\alpha$, the region $\stairs(P)$ lies below the hyperbola
$y=\frac \alpha x$. Thus
$s\le \alpha+\int_{\alpha}^1\frac \alpha x\,{\rm d}x=\alpha+\alpha
\ln\frac 1\alpha$. This yields $\alpha\le e^{-\rho+1}$,
and so $s=\rho\alpha \le \rho e^{-\rho+1}\le \frac 14 \cdot 2^{-\rho}$
(for $\rho$ sufficiently large).

It remains to show that $\sum_{j:j\ne i}\beta_i \ge 2^\rho s$;
since $\sum_{j=1}^k\beta_j=1-s$, it suffices to show 
$\beta_i\le 1-2\cdot 2^\rho s$ for all~$i$.
 
Let $y_i$ be the $y$-coordinate of $p_i$ for $i \geq 1$, and let $y_{0} = 1$;
we have $\beta_{i + 1} \le y_{i}-y_{i+1}$ for $i \geq 0$. 

First, if $y_i\le\frac 12$, then $\beta_{i+1}\le
\frac 12\le 1-2\cdot 2^\rho s$ by the above, and 
we are done. So we assume $y_i>\frac 12$.

The area of $\stairs(P)$ can be bounded from above as indicated
in Fig.~\ref{f:sta}(b). Namely, the rectangle $R$ has area 
at most $\alpha$ (since it is contained in $\stairs(P)$), and
the rectangle $R'$ above it also has area no more than $\alpha$ 
(using $y_i>\frac 12$).  The top right corner of $R''$
lies on the hyperbola $y=\frac\alpha x$ used above, and thus $R''$
has area at most $\alpha$ as well. Finally, the region $H$
on the right of $R''$ and below the hyperbola has area
$\int_{\alpha/y_{i+1}}^1\frac\alpha x\,{\rm d}x =\alpha\ln(y_{i+1}/\alpha)$.



Since $\stairs(P)\subseteq R\cup R'\cup R''\cup H$, we 
have $s\le \alpha(3+\ln(y_{i+1}/\alpha))$. Using $\rho=\frac s\alpha$
we  obtain $y_{i+1}\ge \alpha e^{\rho-3}= se^{\rho-3}/\rho
\ge 2\cdot2^{\rho} s$ (again using the assumption that $\rho$ is large).

Finally, we have $\beta_{i+1}\le 1-y_{i+1}\le 1-2\cdot 2^\rho s$,
and the lemma is proved.
\end{proof}

\section{Proof of Theorem~\ref{t:}}
\begin{proof}
Let $r\ge r_0$.
We may assume that $r$ is of the form $r=1/\bob(n)$,
where $n=n(r)$. In particular, $\bob(m)>\frac 1r$ for all $m<n$.

We will derive the following recurrence for such an $r$:
\begin{equation}\label{e:recur}
n(r)\ge 2 n(r-2^{-(r+1)/2}).
\end{equation}
Applying it iteratively $t:=2^{r/2}$ times, we find that
$n(r)\ge 2^t n(r-1)\ge 2^t$ as claimed in the theorem.

We thus start with the derivation of (\ref{e:recur}).
Let us look at the inequality (\ref{e:decompose}) for an $n$-point set $P$
that attains $\bob(n)$.\footnote{Or rather,
since we haven't proved that $\bob(n)$
is attained, we should choose $n$-point $P$
with  $\bob(P)<\bob(n')$ for all $n'<n$.}
Since $n_i:=|P_i|<n$ for all $i$, we have
$\bob(P_i)>\frac 1r$ for all $i$. 

Let $\alpha$ and $s$ be as above.
First we derive $\rho=\frac s\alpha\ge r$.
Indeed, if we had $\alpha>\frac sr$, then the right-hand of 
(\ref{e:decompose}) can be estimated as follows:
$$
\alpha+\sum_{i=1}^k \beta_i \bob(P_i)> 
\frac 1r\biggl(s+\sum_{i=1}^k\beta_i\biggr)=
\frac1r,
$$
which contradicts the inequality~(\ref{e:decompose}).
So $\rho\ge r\ge r_0$ indeed.

Let us set $\gamma_i :=
\bob(P_i)-\frac 1r$; this is Bob's ``gain'' over the ratio $\frac 1r$
in the $i$th subproblem. From (\ref{e:decompose}) we have
\begin{eqnarray*}
\frac 1r&\ge& \sum_{i=1}^k\beta_i\left(\frac 1r+\gamma_i\right)
\ge \frac 1r\biggl( \sum_{i=1}^k\beta_i\biggr)+\sum_{i=1}^k\beta_i\gamma_i
=\frac {1-s}r +\sum_{i=1}^k\beta_i\gamma_i,
\end{eqnarray*}
and so
\begin{equation}\label{e:gains}
\sum_{i=1}^k\beta_i\gamma_i\le \frac sr.
\end{equation}

According to Lemma~\ref{l:nobig}, we can partition the index set
$\{1,2,\ldots,k\}$ into two subsets $I_1,I_2$ so that
$\sum_{i\in I_j}\beta_i\ge 2^\rho s \ge 2^r s$ for $j=1,2$.

Let $i_1$ be such that $\gamma_{i_1}=\min_{i\in I_1}\gamma_i$, and
similarly for $i_2$. Then (\ref{e:gains}) gives, for $j=1,2$,
$$
\frac sr \ge \sum_{i\in I_j}\beta_i\gamma_i\ge \gamma_{i_j}\sum_{i\in I_j}\beta_i\ge \gamma_{i_j} 2^r s,
$$
and so $\gamma_{i_j}\le \gamma^*:= 2^{-r}/r$.

Let us define $r^*<r$ by $\frac 1{r^*}=\frac1r+\gamma^*$.
Then we know that at least two of the sets $P_i$ contain at least
$n(r^*)$ points each, and hence $n(r)\ge 2 n(r^*)$.
We calculate $r^*=\frac r{1+r\gamma^*}\ge r(1-r\gamma^*)= r-r2^{-r}\ge
r-2^{-(r+1)/2}$ (again using $r\ge r_0$).

So we have derived the desired recurrence (\ref{e:recur}),
and Theorem~\ref{t:} is proved.
\end{proof}
\subsection*{Acknowledgments}
This research was partially done at the \emph{Gremo Workshop on Open Problems} 2010,
and the support of the ETH Z\"urich is gratefully acknowledged.
We would like to thank Michael Hoffmann and Bettina Speckmann
for useful discussion, and the organizers and participants of GWOP~2010 for a beautiful workshop.

%
%

\bibliographystyle{abbrv}

\end{document}